\documentclass[12pt]{article}

\advance \topmargin by -\headheight
\advance \topmargin by -\headsep
     
\evensidemargin \oddsidemargin
\usepackage{setspace}
\usepackage[T1]{fontenc}
\usepackage{lmodern}
\usepackage{amsmath, amssymb, amsthm}
\usepackage{subcaption}
\usepackage{hyperref}
\usepackage{enumitem}

\usepackage{float}
\usepackage{physics}
\usepackage{bm}
\usepackage{bbm}

\usepackage{tikz}
\usepackage{pstricks}

\usepackage{xcolor}
\usepackage{color}
\usepackage{rotating}
\usepackage{colortbl}

\usepackage{graphicx}

\usepackage{array}

\newlist{steps}{enumerate}{1}
\setlist[steps, 1]{label = Step \arabic*:}

\definecolor{webgreen}{rgb}{0,0.4,0}
\definecolor{webbrown}{rgb}{0.6,0,0}
\definecolor{purple}{rgb}{0.5,0,0.25}
\definecolor{darkred}{rgb}{0.7,0,0}
\definecolor{darkblue}{rgb}{0,0,0.7}
\definecolor{darkgreen}{rgb}{0,0.7,0}

\hypersetup{
colorlinks=true,
citecolor=darkred,
filecolor=black,
linkcolor=darkblue,
urlcolor=webgreen
}

\usepackage{natbib}
\usepackage{authblk}

\newtheorem{definition}{Definition}

\newtheorem{claim}{Claim}

\newtheorem{proposition}{Proposition}

\newtheorem{theorem}{Theorem}

\newtheorem{observation}{Observation}

\newtheorem{example}{Example}

\makeatletter

\newcommand{\setword}[2]{\phantomsection
#1\def\@currentlabel{\unexpanded{#1}}\label{#2}}
\makeatother

\title{Fair Division of a Heterogeneous Good Between Two Agents: An Ordinal Approach}

\author{}

\author{Mihir Bhattacharya\footnote{Department of Economics, Ashoka University, Rajiv Gandhi Education City, Rai, Sonipat, NCR, Haryana, 131029, India. Email: mihir.bhattacharya@ashoka.edu.in} \,Ojasvi Khare\footnote{ Department of Economics, Shiv Nadar Institution of Eminence, Gautam Buddha Nagar, Uttar Pradesh, 201314. Email: ojasvi.khare@snu.edu.in} \thanks{ We thank Debasis Mishra for providing valuable feedback at various stages of the paper. We are also thankful to Bhaskar Dutta, Saptarshi Mukherjee, Ariel Rubinstein, Arunava Sen and Rohit Vaish for comments and suggestions. Comments and suggestions received at various seminars and conferences were very helpful. We acknowledge the use of Google Gemini AI for limited assistance with discussions, suggestions, and copy editing. The authors declare no conflict of interest. This research received no specific grant from any funding agency in the public, commercial, or not-for-profit sectors.}}

\date{}

\begin{document}

\maketitle

\begin{abstract}

We study the division of a heterogeneous good between two agents into contiguous bundles, each defined by a starting location and a quantity, in a purely ordinal framework that does not rely on cardinal valuations. We introduce a general class of monotonic preferences representable by indifference curves. We show that an allocation is Pareto efficient and envy-free if and only if it lies in a specific “balanced region,” implying that an equal split is fair only when it belongs to this region. We further show that no rule can simultaneously satisfy Pareto efficiency, envy-freeness, and strategy-proofness.

\end{abstract} 

\noindent \textbf{JEL Classification:}  D61, D63, D78

\noindent \textbf{Keywords:} monotonic preferences; heterogeneous good; envy-free, Pareto efficiency; strategy-proofness; fair division.

\newpage

\section{Introduction}

The division of heterogeneous resources under contiguity constraints poses a fundamental tension between fairness and feasibility, with applications ranging from land allocation to scheduling problems. We study a setting in which a resource represented by the interval [0,1] is allocated between two agents, each receiving a contiguous bundle defined by its location and length. In contrast to much of the literature, we adopt a fully ordinal framework, imposing no cardinal structure on preferences.\footnote{Utility representations may be used for convenience; all results are ordinal and invariant to the choice of representation.} This allows us to isolate the structure of fair allocations under minimal assumptions.

Agents have preferences over contiguous intervals of the form $[x, x+q] \subseteq [0,1]$, where $x$ denotes the starting location and $q$ the length of the interval (denoted by $(x; q)$). Preferences depend both on the quantity of the allocation and its location, reflecting the heterogeneity of the resource—for instance, some locations may be more valuable than others. We focus on preferences that are monotonic in quantity.  Therefore, an agent may prefer a shorter interval to a longer one only when the two intervals differ in location.

Such environments arise, for example, in the division of land where location affects productivity, or in the allocation of advertising time where different slots attract different audiences. The central challenge in such division problems is finding allocations that satisfy two fundamental normative criteria: Pareto efficiency and envy-freeness. Pareto efficiency requires that no other feasible allocation Pareto dominates the allocation, while envy-freeness requires that no agent strictly prefers the other agent’s bundle to her own.

To analyze this problem, we map the set of all feasible continuous intervals into a two-dimensional space. Any interval can be uniquely identified by its starting location ($x$) and its quantity or length ($q$). Because the resource is bounded between $0$ and $1$, the maximum quantity available at any location $x$ is simply the remaining length of the resource, $1-x$. When plotted on a Cartesian plane with location on the horizontal axis and quantity on the vertical axis, the set of all possible sub-intervals forms a right-angled isosceles triangle with vertices at $(0,0)$, $(1,0)$, and $(0,1)$, and the hypotenuse represents the boundary $x+q=1$. Within this space, agents' preferences over different intervals can be represented by continuous level sets, allowing us to visually and analytically study fair divisions. 

We proceed as follows. First, we consider two identical individuals and show that any envy-free and Pareto efficient allocation must lie on a common “balanced” locus, defined as a set of bundles whose extreme left and right points correspond to a division that exactly exhausts the resource. Due to the assumption of monotonic preferences, Pareto efficiency immediately dictates that the entire resource must be allocated without waste. When individuals are not identical, each has their own such locus. We show that an allocation is envy-free and Pareto efficient if and only if it lies in this region (on the $q$ axis and hypotenuse).

We also provide a fundamental impossibility result: no fair division rule can be strategy-proof, i.e., immune to manipulation by individual misreporting of preferences. This is notable because there exist allocation rules, such as dictatorship, that satisfy strategy-proofness and Pareto efficiency but fail to be envy-free. Finally, we discuss the implications when there are more than two agents, and conjecture that the richness of the domain leads to the point where fair contiguous allocations may fail to exist.

Our results identify the limits of fair division under contiguity constraints in two interdependent dimensions (location and quantity), highlighting an inherent tension between fairness, efficiency, and strategy-proofness. We now provide a brief literature review to situate our contribution.

\subsection{Literature}

The concept of envy-freeness plays a central role in the theory of fair division. Introduced by \cite{foley1966resource} and further developed by \cite{varian1973equity} and \cite{thomson1984theories}, it requires that no agent strictly prefers another agent’s bundle to her own. It has since become a central axiom in the fair allocation literature (\cite{moulin2004fair}, \cite{fleurbaey2011theory}).\footnote{\cite{thomson2011fair} provides a comprehensive survey.} While the concept is well established, its implications depend critically on the underlying environment and the structure of preferences. We classify the literature into two broad strands.

\medskip

\noindent \textbf{Fair Allocation in Classical and Single-Peaked Environments:}

A significant strand studies envy-free and group envy-free allocations within Walrasian settings (\cite{foley1966resource, varian1973equity, cato2010local, donnini2021absence}). These models rely on monotonicity and convexity. Similarly, models of multiple divisible goods without institutions (\cite{richter2020permissible}) rely on convexity of preferences and feasibility. While preferences in our framework are strictly monotonic in quantity, these results do not directly apply because our feasible set is non-convex.\footnote{In our model, feasibility is non-convex due to the “either-or” nature of spatial allocation. For instance, two feasible allocations may have an infeasible midpoint because it induces overlap.}

Another related strand considers single-peaked preferences. \cite{sprumont1991division} and \cite{thomson1994resource} characterize the uniform rule in a one-dimensional setting where agents have peaks over quantity. Extensions to multi-dimensional settings have also been studied (\cite{morimoto2013characterization, anno2013second}). In contrast, the tension in our model is entirely spatial: agents are monotonic in quantity (their most preferred quantity is the full resource), but the interaction between location and quantity fundamentally alters the structure of the problem.

\medskip

\noindent \textbf{Heterogeneous Goods and Cake-Cutting:}

When the resource is heterogeneous, the problem is commonly formulated as cake-cutting. In this literature, agents typically have cardinal valuation functions over an interval. Ensuring envy-freeness—especially for many agents—is often algorithmically complex. \cite{stromquist2008envy} establishes impossibility results for finite procedures, while \cite{brams1995old, brams2013n} highlight the complexity of such algorithms. As noted by \cite{lindner2015cake}, universally bounded protocols remain elusive. Although recent work such as \cite{aziz2016discrete} provides exact procedures, these can require a large number of queries. Positive results often rely on domain restrictions such as normalization or specific functional forms (\cite{caragiannis2012efficiency, chen2013truth, aumann2015efficiency, bogomolnaia2023guarantees}).

By contrast, our approach is ordinal. Rather than working with additive representations, we analyze allocations directly in the $(x;q)$ space. This provides a tractable alternative to measure-theoretic approaches while preserving a rich preference domain.

\medskip

\noindent\textbf{Our Approach: An Ordinal Approach}

In standard cake-cutting models (\cite{procaccia2016cake}), preferences are represented by additive functions over a one-dimensional space. By contrast, our framework parameterizes bundles directly by their starting location ($x$) and quantity ($q$). Even when utility representations such as $u(x;q)=q(1-x)$ are used for illustration, no additivity or integrability assumptions are required.

We map all feasible bundles to a right-angled triangle defined by $x+q \leq 1$. Preferences are represented ordinally over this space. This perspective allows us to characterize the “balanced region” of allocations satisfying Pareto efficiency and envy-freeness, and to establish the limits of strategy-proofness. 

\subsection{Results and Contribution}

We characterize the set of Pareto efficient and envy-free allocations. For each agent with monotonic preferences, there exists a cutoff at which the agent is indifferent between the left and right portions of the resource. A balanced cut of an agent is the cut (or point) on the $x=0$ axis at which the agent values the left piece and the right piece exactly the same. We show in Proposition \ref{prop_identical} that the existence of balanced cut points is necessary for the existence of fair allocation rules. For two agents, these cutoffs define a \emph{balanced region}. An allocation is Pareto efficient and envy-free if and only if it lies along the left and right edge of this region. In particular, equal division is fair if and only if it belongs to this endogenously determined set.

Theorem \ref{thm1} shows that the interaction of location and quantity fundamentally alters the structure of fair allocations. Unlike standard one-dimensional environments, where fairness often selects a unique outcome, spatial heterogeneity generates a continuum of fair allocations. The balanced region thus provides a tractable description of the fairness frontier.

We also establish an impossibility result: no allocation rule satisfies Pareto efficiency, envy-freeness, and strategy-proofness simultaneously. While pairs of these properties are attainable, all three are not, revealing a fundamental trade-off in allocating heterogeneous divisible resources.

Our contribution is to provide a purely ordinal characterization of fair allocation in a spatial environment. In contrast to classical cake-cutting models with additive cardinal valuations, we work with monotonic preferences over location–quantity bundles and admit preferences without density representations. Relative to single-peaked models, where fairness is typically unique or locally characterized, spatial heterogeneity yields a non-degenerate set of fair allocations. This isolates the role of spatial trade-offs in shaping both existence and structure.

Extending the analysis to more than two agents raises substantial difficulties. Unlike the standard cake-cutting model with additive preferences, where contiguous envy-free allocations exist for any number of agents (see, e.g., \cite{stromquist1980cut}, \cite{brams2013n}) our ordinal framework over location and quantity admits richer, non-separable preferences that need not satisfy additivity. Existence results are considerably more fragile in such settings, and envy-free allocations may fail to exist under general preference domains (see, e.g., \cite{aziz2016discrete}). In particular, for $n \geq 3$, any feasible allocation must assign interior intervals to some agents, and when agents strictly prefer boundary allocations, this creates an inherent tension: compensating interior agents induces envy among boundary agents, and vice versa. These considerations indicate that extending the existence result beyond two agents is nontrivial and may require additional restrictions on preferences.

A related question is whether the incompatibility between Pareto efficiency, envy-freeness, and strategy-proofness extends beyond two agents. We conjecture that it does. One possible intuition is that multi-agent allocation problems may contain embedded two-agent conflicts: after fixing the allocations of all but two agents, the remaining agents face a residual division problem. If the strategic and fairness tensions identified in the two-agent setting persist in such residual problems, then analogous impossibility phenomena may arise for larger societies. Establishing such a reduction, however, appears nontrivial and is left for future research.

The remainder of the paper is organized as follows. Section \ref{sec2} introduces the model and the domain of preferences. Section \ref{sec3} defines Pareto efficiency, envy-freeness, and strategy-proofness. Section \ref{sec4} characterizes the balanced region and the set of fair allocations. Section \ref{sec5} presents the impossibility result. Proofs are collected in the Appendix.

\section{The Model}\label{sec2}

A heterogeneous and (perfectly) divisible good is distributed on the interval $[0,1]$. The set of agents is denoted by $N=\{1,2\}$. We now introduce the ``alternatives'' in our models over which the agents' have preferences. Each bundle is an interval $(x;q)\equiv [x,x+q] \subseteq [0,1]$, and will be denoted by $(x;q)$ where we interpret $x$ as the location and $q$ the quantity of the interval. A \textit{feasible allocation} for two agents is a set of bundles $\{(x_1;q_1),(x_2;q_2)\}$ such that (i) $q_1+q_2 \leq 1$, and (ii) for any pair of distinct agents $i,j \in N$, $\max(x_i, x_j) \ge \min(x_i+q_i, x_j+q_j)$. The set of all feasible allocations will be denoted by $\mathcal{A}$.\footnote{Specifically, condition (ii) ensures that the interiors of the allocated bundles are strictly disjoint; they can intersect at most at a single shared boundary point (i.e., when $x_i+q_i = x_j$).} 

The above notation is useful because the set of all alternatives can be located inside a right-angled isosceles triangle with base and height equal to $1$. At any $x\in [0,1]$ the maximum quantity that can be allocated is $1-x$ which are the edges of the right-angled isosceles triangle. An illustration is provided in Figure \ref{fig1_allocationTxq}. We denote by $T_x^q$ the right-angled isosceles triangle with vertices $\{(x;0),(x+q;0),(x;q)\}$.


\begin{figure}[htbp]
    \centering
    \begin{subfigure}{0.48\textwidth} 
        \centering
        \includegraphics[scale=0.4]{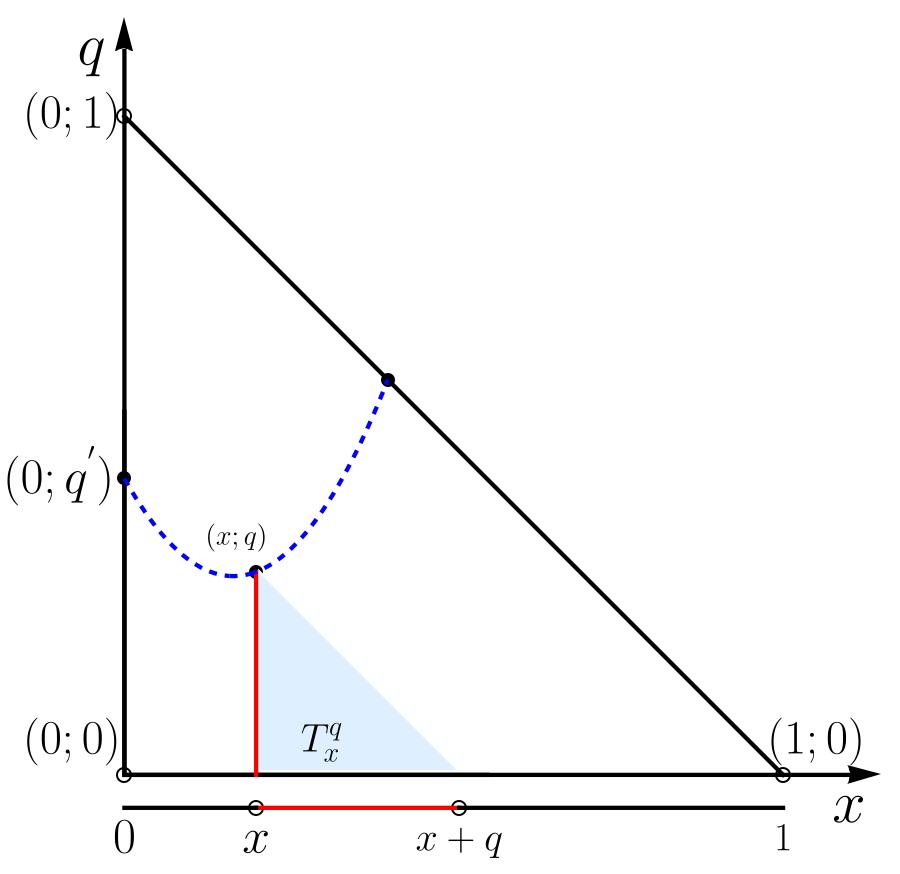}
        \caption{Allocation corresponding to $(x;q)$}
        \label{fig1_corrallocation}
    \end{subfigure}
    \hfill
    \begin{subfigure}{0.48\textwidth}
        \centering
        \includegraphics[scale=0.4]{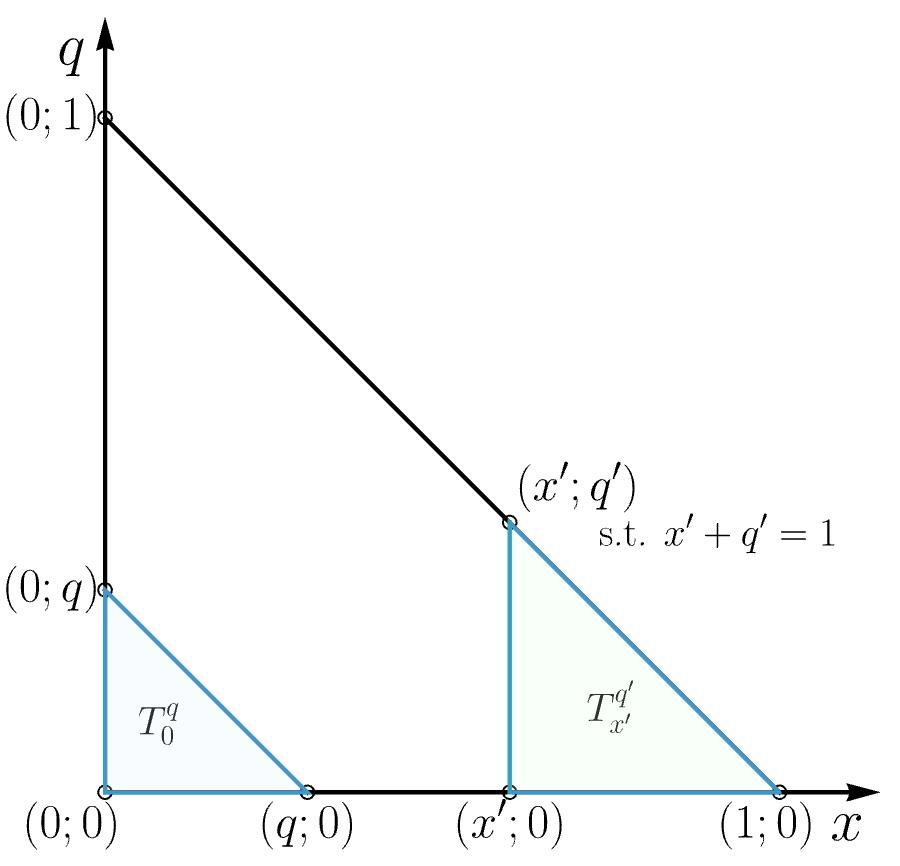}
        \caption{Illustration for $T_0^q$ and $T_{x'}^{q'}$}
        \label{fig1_Txq}
    \end{subfigure}
    
    \caption{The set of alternatives $X$ is the set of all the points in the triangle $T_0^1$ with vertices $\{(0;0),(0;1),(1;0)\}$}
    \label{fig1_allocationTxq}
\end{figure}

\subsection{Preferences}

\underline{Assumption 0:} \textit{Preferences are a continuous weak order.} For each agent $i \in N$, $\succsim_i$ is complete, transitive, and continuous.\footnote{Continuity: for all $x \in T_0^1$, the upper and lower contour sets $\{y \mid y \succeq x\}$ and $\{y \mid x \succeq y\}$ are closed.}

We write $\succ$ and $\sim$ for the strict and indifference parts of $\succsim$ respectively. Let $\mathcal{C}$ denote the set of all such preference relations. By \cite{debreu1954representation,debreu1959theory}, any $\succsim \in \mathcal{C}$ admits a continuous utility representation $u:T_0^1 \to \mathbb{R}$, and hence continuous indifference curves. We impose the following properties on preferences.

\medskip

The first assumption states that when an agent receives no quantity, the location of the allocation is irrelevant. All bundles with zero quantity are equivalent. 

\noindent\underline{Assumption 1:} \textit{Zero quantity is indifferent across locations.}  
For all $\succsim_i \in \mathcal{C}$ and all $x \in (0,1]$,  
\[
(x;0) \sim_i (0;0).
\]

\noindent\underline{Assumption 2:} \textit{(Set-inclusion) Monotonicity.} For all $\succsim_i \in \mathcal{C}$ and $x,x',q,q'\in [0,1]$, if $[x',x'+q'] \subsetneq [x,x+q]$, then $(x;q) \succ_i (x';q')$.

This captures the idea that “more is better”: any bundle that strictly contains another is strictly preferred. In particular, $(0;1)$ is the unique maximal bundle. Together with Assumption 1, it follows that every zero-quantity bundle $(x;0)$ is a minimal bundle, and all such bundles are indifferent to one another. 

Monotonicity implies the following observations:

\begin{observation}\label{ob_monotonic}Let $(x;q)\in A$ and let $\delta>0$ be such that the expanded bundle is feasible.Then:\begin{align*}\text{(i)}\;& (x;q+\delta)\succ_i (x;q),\\\text{(ii)}\;& (x-\delta;q+\delta)\succ_i (x;q).\end{align*}\end{observation}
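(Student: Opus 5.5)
Both parts follow directly from Assumption 2 (set-inclusion monotonicity). In each case I will check that the original interval $[x,x+q]$ is a proper subset of the expanded interval, and then invoke the assumption with the expanded bundle in the role of $(x;q)$ and the original in the role of $(x';q')$. Throughout I use the hypothesis that the expanded bundle is feasible, i.e.\ lies in $T_0^1$, so that Assumption 2 applies to it.

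\emph{Part (i).} The expanded bundle $(x;q+\delta)$ is the interval $[x,x+q+\delta]$. Since $\delta>0$, we have
\[
[x,x+q]\subseteq [x,x+q+\delta],
\]
and the containment is strict because the point $x+q+\delta$ lies in the larger interval but not in the smaller one. Assumption 2 then gives $(x;q+\delta)\succ_i (x;q)$.

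\emph{Part (ii).} The expanded bundle $(x-\delta;q+\delta)$ is the interval $[x-\delta,x+q]$, which extends the original to the left while keeping the same right endpoint. Feasibility requires $x-\delta\ge 0$, and this is guaranteed by hypothesis. Since $x-\delta<x$, we have
\[
[x,x+q]\subseteq [x-\delta,x+q],
\]
and the containment is strict because $x-\delta$ belongs to the larger interval but not the smaller one. Assumption 2 gives $(x-\delta;q+\delta)\succ_i (x;q)$.

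\emph{Degenerate case $q=0$.} Here the original bundle is the single point $\{x\}$. This point still lies inside the expanded interval, which has positive length, so the inclusion remains strict and the same argument goes through. Alternatively, one can combine Assumption 1 with the minimality of zero-quantity bundles noted after Assumption 2.

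The only step needing any care is strictness of the inclusion, and that is immediate from $\delta>0$.
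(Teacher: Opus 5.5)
Your proof is correct and takes the same approach as the paper, which gives no separate argument and simply says the observation follows from Assumption 2 (set-inclusion monotonicity). You supply the two strict inclusions $[x,x+q]\subsetneq[x,x+q+\delta]$ and $[x,x+q]\subsetneq[x-\delta,x+q]$ that this one-line appeal leaves implicit, and your handling of the degenerate case $q=0$ is also fine.
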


Let $\mathcal{D} \subseteq \mathcal{C}$ denote preferences satisfying Assumptions 1–2; we refer to these preferences as \textit{monotonic}.

\medskip

\begin{figure}[ht] \centering \includegraphics[scale=0.45]{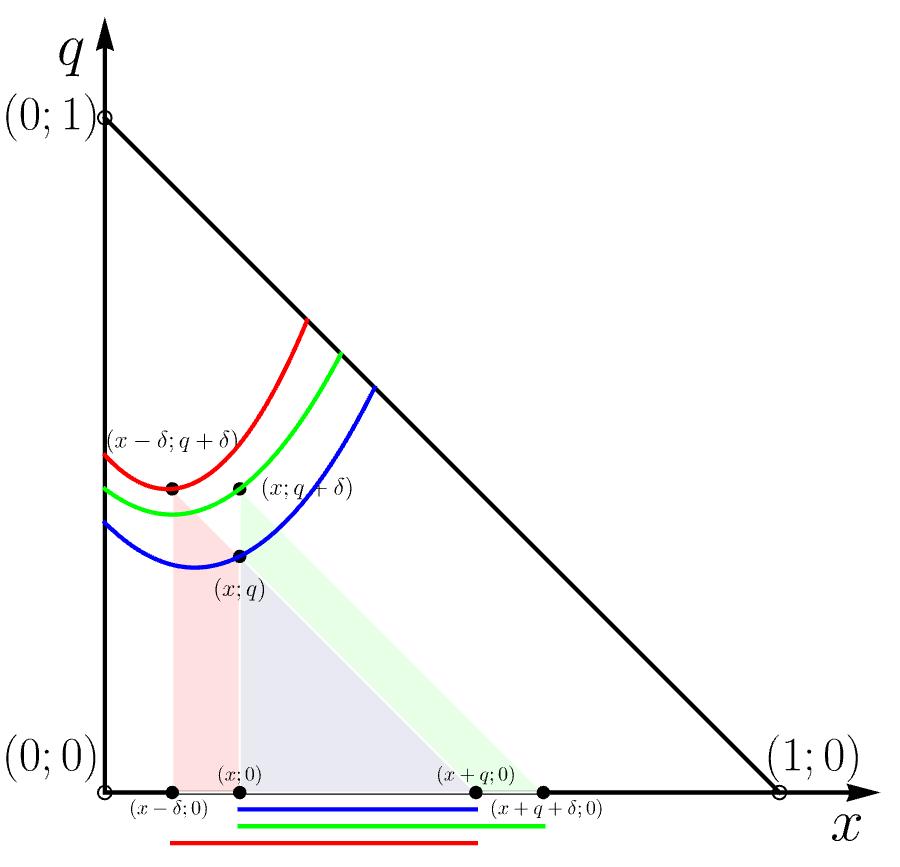} \caption{Implications of Monotonicity} \label{im_mon_obser} \end{figure}

\begin{example}
The function,
\[
u(x;q)=q(1-x)
\]
is continuous and represents preferences in $\mathcal{D}$.
\end{example}

\begin{observation}
For $u(x;q)=q(1-x)$:

(i) Assumption 1 holds since $u(x;0)=0$ for all $x$.

(ii) The indifference curve through $(0;q_0)$ is given by
\[
q(1-x)=q_0 \;\Rightarrow\; q=\frac{q_0}{1-x},
\]
terminating on $x+q=1$ at $(1-\sqrt{q_0},\sqrt{q_0})$.

(iii) More generally, functions of the form $u(x;q)=q-(m(q)x+c)$ (with $m(0)=0$ and $m$ increasing) generate preferences in $\mathcal{D}$.
\end{observation}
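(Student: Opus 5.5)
The plan is to verify each item directly from the definitions, treating $u(x;q)=q(1-x)$ on $T_0^1$ first and then the general family in (iii). Continuity is immediate because $u$ is a polynomial, so $u$ represents a continuous weak order in $\mathcal{C}$. Item (i) is also immediate: $u(x;0)=0\cdot(1-x)=0=u(0;0)$ for every $x$, which is Assumption 1.

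For Assumption 2 I will translate set inclusion into coordinates. The relation $[x',x'+q']\subsetneq[x,x+q]$ holds iff $x\le x'$ and $x'+q'\le x+q$, with at least one of these inequalities strict. These two inequalities give $q'\le q-(x'-x)\le q$. I will then split into two cases.
\begin{itemize}
\item If $q'=0$, then $u(x';q')=0$. Strict inclusion forces $q>0$, and feasibility with $q>0$ forces $x<1$, so $u(x;q)>0$.
\item If $q'>0$, then $1-x\ge 1-x'>0$ and $q\ge q'>0$. Either $q>q'$, or $q=q'$, in which case strictness forces $x<x'$. In both subcases the product $q(1-x)$ strictly exceeds $q'(1-x')$.
\end{itemize}
Observation \ref{ob_monotonic} then follows as special cases.

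Item (ii) is algebra. Setting $q(1-x)=q_0$ gives $q=q_0/(1-x)$. Intersecting with the hypotenuse $q=1-x$ gives $(1-x)^2=q_0$, so the curve meets $x+q=1$ at $x=1-\sqrt{q_0}$ and $q=\sqrt{q_0}$. Continuity of the curve between $(0;q_0)$ and this endpoint follows from the explicit formula.

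For (iii), write $u(x;q)=q-m(q)x-c$. Assumption 1 holds because $m(0)=0$ gives $u(x;0)=-c$ for all $x$. The real work, and the step I expect to be the main obstacle, is set-inclusion monotonicity. Moving the left endpoint leftward with $q$ fixed weakly increases $u$ because $m(q)\ge m(0)=0$; this is strict if $m(q)>0$, and otherwise one should rely on the $q$-direction. Increasing $q$ at a fixed $x$ needs $q\mapsto q-m(q)x$ to be strictly increasing. With $m$ increasing this is not automatic, so I would add a regularity condition such as $m$ being Lipschitz with constant below $1$, or $m'(q)x<1$ on $T_0^1$. Observation \ref{ob_monotonic}(ii) also has to be checked under this condition. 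Concretely, I would decompose any strict inclusion into a leftward shift of the left endpoint followed by an extension of the right endpoint, and check that each step weakly increases $u$ and at least one step does so strictly. If the stated hypotheses on $m$ turn out to be insufficient, I would record the extra condition explicitly, so that (iii) is read as holding under a mild growth restriction on $m$.
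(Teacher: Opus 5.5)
Your verification of (i), (ii), and of Assumption~2 for $q(1-x)$ is correct. The paper states this observation without proof, and your case split is an expanded version of the one-line monotonicity argument in Claim~\ref{claim_rep}. One small slip: the subcase $q=q'$ is empty, not one that forces $x<x'$. Indeed, $x\le x'$ together with $x'+q\le x+q$ gives $x=x'$. So every strict inclusion of intervals has $q'<q$, and this fact is what drives (iii).

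The genuine problem is (iii), and you should not leave it conditional. Under the stated hypotheses ($m(0)=0$, $m$ increasing) it is false, so no proof of it as written can exist. Take $m(q)=2q$ and $c=0$, so that $u(x;q)=q(1-2x)$. Then $u(\tfrac34;\tfrac14)=-\tfrac18<0=u(\tfrac34;0)$, even though $[\tfrac34,\tfrac34]\subsetneq[\tfrac34,1]$, which contradicts Assumption~2. Every $m(q)=kq$ with $k>1$ fails in the same way, as does any $m$ with a jump in $[0,1)$.

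Your proposal should record such a counterexample and then prove a corrected statement. Your decomposition handles the corrected version once it is written out as
\[
u(x;q)-u(x';q')=\bigl[(q-q')-x\bigl(m(q)-m(q')\bigr)\bigr]+m(q')(x'-x).
\]
The second term is nonnegative, since $m(q')\ge m(0)=0$ and $x'\ge x$. The first term is positive as long as $x\bigl(m(q)-m(q')\bigr)<q-q'$ for all feasible $x\le 1-q$. Taking $x=x'=1-q$ shows that the condition $(1-q)\bigl(m(q)-m(q')\bigr)<q-q'$ for all $0\le q'<q\le 1$ is both necessary and sufficient for Assumption~2.

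In particular, a Lipschitz constant $\le 1$ suffices, with strictness coming from $x\le 1-q<1$. Your ``constant below $1$'' is needlessly strong: it excludes $m(q)=q$, which is exactly the paper's own example $q(1-x)$. Finally, Observation~\ref{ob_monotonic}(ii) needs no separate check, since it is a special case of Assumption~2.
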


\begin{claim}\label{claim_rep}
The function $u(x;q)=q(1-x)$:
\begin{enumerate}
\item represents preferences that are complete, transitive, continuous, and monotonic;
\item cannot be written as $u(x;q)=\int_x^{x+q} v(t)\,dt$ for any continuous $v$.
\end{enumerate}
\end{claim}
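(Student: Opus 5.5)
Part 1 is mostly a direct verification, and the only substantive item is monotonicity.

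Completeness and transitivity hold because $\succsim$ is defined by comparing real values of $u$. Continuity holds because $u(x;q)=q(1-x)$ is a continuous polynomial on the compact triangle $T_0^1$. Its upper and lower contour sets are therefore preimages of closed half-lines, hence closed. Assumption 1 is immediate, since $u(x;0)=0$ for every $x$.

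For Assumption 2, suppose $[x',x'+q']\subsetneq[x,x+q]$. Then $x\le x'$ and $q'\le q$, and at least one of these inequalities is strict: if both were equalities the intervals would coincide.
\begin{itemize}[label={}]
\item If $q'=0$, then $q>0$ (otherwise strict inclusion fails), so $u(x;q)>0=u(x';q')$. Here I use $x<1$, which holds because $q>0$ and $x+q\le 1$.
\item If $q'>0$, then $1-x\ge 1-x'>0$, since $x'<x'+q'\le 1$. Also $q\ge q'>0$, with one of the two inequalities strict. Hence $q(1-x)>q'(1-x')$.
\end{itemize}
This establishes strict preference. I expect the only care needed is handling these boundary cases where a factor vanishes.

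Part 2 is the main step, and I would argue by contradiction. Suppose $q(1-x)=\int_x^{x+q}v(t)\,dt$ for all feasible $(x;q)$, with $v$ continuous. I would fix $x=0$, which gives $q=\int_0^q v(t)\,dt$ for all $q\in[0,1]$. Differentiating in $q$ by the fundamental theorem of calculus (valid since $v$ is continuous) yields $v(q)=1$ for all $q$.

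Substituting back then gives $\int_x^{x+q}v=q$. For example, at $(x;q)=(1/2;1/2)$ this integral equals $1/2$, whereas $u(1/2;1/2)=1/4$, a contradiction.

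One might worry that the statement is meant ordinally, i.e. that $u$ could only be required to agree with some additive representation up to a monotone transform. If so, I would instead exhibit an ordering incompatible with additivity. Additivity would imply that the indifference $(0;q_0)\sim(x;q)$ is determined by integrals of $v$, but I expect the literal functional statement is what is intended, and the derivative argument settles it.
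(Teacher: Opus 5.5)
Your proof is correct and follows the same route as the paper: part 1 is a direct check from the continuous real-valued representation (your case split on $q'=0$ versus $q'>0$ is more careful than the paper's one-line monotonicity remark), and part 2 differentiates the integral identity in $q$ using the fundamental theorem of calculus. Setting $x=0$ to obtain $v\equiv 1$ and then evaluating at $(1/2;1/2)$ spells out the contradiction that the paper only gestures at after deriving $1-x=v(x+q)$.
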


\begin{proof}
(i) Completeness and transitivity follow from real-valued representation; continuity follows from continuity of $u$. Monotonicity follows since any strict expansion strictly increases $q(1-x)$.

(ii) Suppose $q(1-x)=\int_x^{x+q} v(t)\,dt$. Differentiating w.r.t.\ $q$ gives $1-x=v(x+q)$ for all $(x;q)$, which is a contradiction since the RHS depends only on one variable, $x$.
\end{proof}

\begin{observation}
Any utility of the form $u(x;q)=\int_x^{x+q} v(t)\,dt$ with $v(t)>0, \forall t$ represents preferences that satisfies assumptions 0-2. Hence, $\mathcal{D}$ contains such preferences.
\end{observation}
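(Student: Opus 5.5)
The plan is to check Assumptions 0, 1 and 2 directly for $u(x;q)=\int_x^{x+q} v(t)\,dt$, taking $v$ continuous (or at least integrable and bounded) with $v(t)>0$ for all $t\in[0,1]$. Then I will conclude that the induced relation $\succsim$, defined by $(x;q)\succsim(x';q')$ iff $u(x;q)\ge u(x';q')$, lies in $\mathcal{D}$.

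\textbf{Assumption 0.} Completeness and transitivity are inherited from the order on $\mathbb{R}$, since $\succsim$ has a real-valued representation. For continuity, write $u(x;q)=V(x+q)-V(x)$ with $V(s)=\int_0^s v(t)\,dt$. Then $V$ is continuous (indeed Lipschitz when $v$ is bounded), so $u$ is continuous on $T_0^1$. The upper contour set $\{y: u(y)\ge u(z)\}$ and the lower contour set $\{y: u(y)\le u(z)\}$ are preimages of closed rays under a continuous map, so both are closed.

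\textbf{Assumption 1.} This is immediate: $u(x;0)=\int_x^x v=0=u(0;0)$ for every $x$.

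\textbf{Assumption 2.} This is the only step that uses positivity of $v$, and the one point to handle with care. Suppose $[x',x'+q']\subsetneq[x,x+q]$. Then
\[
u(x;q)-u(x';q')=\int_x^{x'}v(t)\,dt+\int_{x'+q'}^{x+q}v(t)\,dt ,
\]
and both integration ranges are nonnegative because $x\le x'$ and $x'+q'\le x+q$. Strict inclusion means at least one of these two intervals has positive length. Since $v>0$ on an interval of positive length, that integral is strictly positive; under continuity, $v$ is bounded below by a positive constant on a compact subinterval. The other integral is nonnegative. Hence $u(x;q)>u(x';q')$, that is, $(x;q)\succ(x';q')$.

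Together these steps show that $\succsim\in\mathcal{D}$. It follows that $\mathcal{D}$ contains every preference generated by a positive density, and so it strictly contains the additive cake-cutting preferences, given Claim \ref{claim_rep}.
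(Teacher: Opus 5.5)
Your proof is correct and takes the same direct-verification route the paper implicitly relies on; the paper gives no separate argument, but its proof of Claim \ref{claim_rep}(i) works the same way. A continuous real-valued representation gives Assumption 0, $u(x;0)=0$ gives Assumption 1, and the decomposition $u(x;q)-u(x';q')=\int_x^{x'}v+\int_{x'+q'}^{x+q}v$ together with $v>0$ gives Assumption 2. Your closing remark about strict containment goes beyond the statement, and Claim \ref{claim_rep}(ii) does not establish it ordinally on its own: it shows only that the function $q(1-x)$ is not such an integral, not that no increasing transformation of it is.
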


Our domain departs from the standard cake-cutting framework in two key ways. First, we impose contiguity and work directly with ordinal preferences over intervals, rather than cardinal value densities. This allows us to accommodate preferences that cannot be represented as integrals of a single density function, as illustrated in Claim \ref{claim_rep}. Second, monotonicity is formulated via set inclusion, which is weaker than additivity but strong enough to retain a meaningful notion of “more is better.” As a result, $\mathcal{D}$ strictly enlarges the class of preferences typically studied in the literature, while preserving tractability.

\section{Axioms}
\label{sec3}

 We introduce some definitions before imposing the axioms. A preference profile $\succsim=(\succsim_{1}, \succsim_{2})$ is an element of $\mathcal{D}^{2}$. A \textit{allocation rule} is a function, $f:\mathcal{D}^2 \rightarrow \mathcal{A}$, that takes in a preference profile $\succsim\in\mathcal{D}^{2}$ and produces a feasible allocation  $f(\succsim)=(f_1(\succsim),f_2(\succsim))\in \mathcal{A}$, where $f_i(\succsim)\in T_0^1$ is the allocation of agent $i\in N$. 

The first axiom is the standard notion of efficiency: an allocation is Pareto efficient if no agent can be made strictly better-off without making another agent strictly worse-off. 

\begin{definition}[Pareto Efficiency (PE)] An allocation rule, $f:\mathcal{D}^2\to \mathcal{A}$, is PE if for any preference profile $\succsim \in \mathcal{D}^2$ there does not exist another feasible allocation $\{(x_i;q_i)\}_{i\in N}\in \mathcal{A}$ s.t. $ (x_i;q_i) \succsim_i f_i(\succsim)$ for all $i\in N$ and $(x_j;q_j) \succ_j f_j(\succsim)$ for some $j\in N$.\end{definition}

In other words, an allocation is PE if there is no other allocation which makes both agents strictly better-off. The next axiom is an important one in the theory of equity.

\begin{definition}[Envy-free (EF)] An allocation rule, $f:\mathcal{D}^2\to \mathcal{A}$, is said to be \textit{EF} if for all $\succsim\in \mathcal{D}^2$, for all $i,j\in N$, 
$f_i(\succsim)\succsim_i f_j(\succsim)$.\end{definition}

An allocation rule is EF if every agent prefers her own bundle to any other agent's bundle. Our final axiom, strategy-proofness, is an inter-profile condition.

\begin{definition}[Strategy-proof (SP)] An allocation rule, $f:\mathcal{D}^2\to \mathcal{A}$, is said to be SP if for any $i\in N$ and for any $(\succsim_i,\succsim_{-i})\in \mathcal{D}^2$,
\[
f_i(\succsim_i,\succsim_{-i}) \succsim_i f_i(\succsim^{'}_i,\succsim_{-i}) \,\,\,\text{ for all } \succsim_{i}'\, \in\mathcal{D}.
\]\end{definition}

An allocation is SP if it does not provide any incentive for individual agents to benefit strictly by individually misreporting one's preference. We now present the results for fair allocation rules. 

\section{Results on fair division}\label{sec4}

We provide the results for 2 agents when preferences are identical. First, we show that every preference in $\mathcal{D}$ has a cut point on the two axes which partitions the unit interval into two intervals which are equally preferred (Figure \ref{fig_balanced}). This acts as a foundation for the general results on fair allocations in the rest of the paper. 

\subsection{Existence of a Balanced Cut}

\begin{definition}[Balanced Cut] For any $i\in N$ the \textit{balanced cut}, $q_i^f\in [0,1]$ is such that $(0;q_{i}^{f})\sim_{i} (q_{i}^{f};1-q_{i}^{f})$.\end{definition}

A balanced cut of an agent is the cut (or point) on the $x=0$ axis at which the agent values the left piece and the right piece exactly the same. We show in Proposition \ref{prop_identical} that the existence of balanced cut points is necessary for the existence of fair allocation rules. 

\begin{figure}[ht]
    \centering
    \includegraphics[scale=0.55]{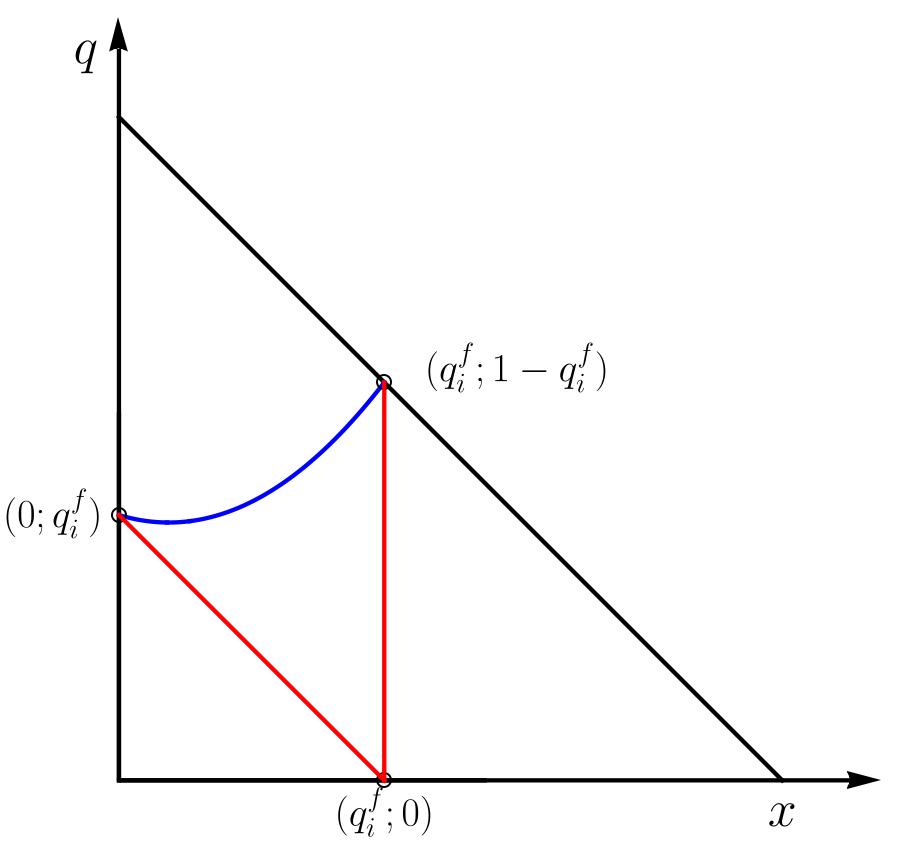}
    \caption{Balanced IC for $\succsim_i$ where $(0;q_i^f)\sim_i (q_i^f;1-q_i^f)$}
    \label{fig_balanced}
\end{figure}

\begin{proposition}[Existence and Uniqueness of Balanced Cut]\label{prop_balanced_noICF}
For any agent $i$ with preferences $\succsim_i \in \mathcal{D}$, there exists a unique $q_i^f \in (0,1)$ such that
\[
(0;q_i^f) \sim_i (q_i^f;1-q_i^f).
\]
\end{proposition}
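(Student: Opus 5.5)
Fix a continuous utility representation $u_i$ of $\succsim_i$ on $T_0^1$, which exists by Assumption 0 and the Debreu representation cited above. Define $g:[0,1]\to\mathbb{R}$ by
\[
g(t)=u_i(0;t)-u_i(t;1-t),
\]
so $g(t)$ compares the left piece $[0,t]$ with the right piece $[t,1]$. Since $t\mapsto(0;t)$ and $t\mapsto(t;1-t)$ are continuous maps into $T_0^1$, $g$ is continuous. The statement is then equivalent to $g$ having exactly one zero in $(0,1)$, and I will show this with the intermediate value theorem for existence and strict monotonicity of $g$ for uniqueness.

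\textbf{Existence.} At $t=0$ the left piece is $(0;0)$ and the right piece is $(0;1)$. Since $[0,0]\subsetneq[0,1]$, Assumption 2 gives $(0;1)\succ_i(0;0)$, so $g(0)<0$. At $t=1$ the left piece is $(0;1)$ and the right piece is $(1;0)$. Assumption 1 gives $(1;0)\sim_i(0;0)$, and Assumption 2 gives $(0;0)\prec_i(0;1)$, so $g(1)>0$. By the intermediate value theorem there is $q_i^f\in(0,1)$ with $g(q_i^f)=0$, which is exactly $(0;q_i^f)\sim_i(q_i^f;1-q_i^f)$.

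\textbf{Uniqueness.} Take $0\le t<t'\le 1$.
\begin{itemize}
\item For the left pieces, $[0,t]\subsetneq[0,t']$, so $u_i(0;t')>u_i(0;t)$.
\item For the right pieces, $[t',1]\subsetneq[t,1]$, so $u_i(t';1-t')<u_i(t;1-t)$.
\end{itemize}
Hence $g$ is strictly increasing and has at most one zero. The only delicate case is a degenerate piece of length zero. When $t'=1$ the right piece $[1,1]$ is still a proper subset of $[t,1]$, so Assumption 2 applies directly. Similarly, $[0,0]\subsetneq[0,t']$ handles $t=0$. Assumption 1 is needed only to fix the endpoint value at $t=1$.

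\textbf{Main obstacle.} The step I would check most carefully is continuity of $g$. It relies on continuity of the representation on the closed triangle and of the two parametrizations. I would verify this directly from closedness of the upper and lower contour sets rather than trusting the representation theorem blindly. If needed, continuity can be avoided altogether by a cut argument. Let
\[
L=\{t:(0;t)\succsim_i(t;1-t)\},\qquad R=\{t:(t;1-t)\succsim_i(0;t)\}.
\]
Both sets are closed, nonempty, and cover $[0,1]$, so by connectedness they intersect. The monotonicity argument above then gives uniqueness of the intersection point.
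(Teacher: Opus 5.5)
Your proof is correct and takes essentially the same approach as the paper. The paper shows that the closed sets $A=\{q:(0;q)\succsim_i(q;1-q)\}$ and $B=\{q:(q;1-q)\succsim_i(0;q)\}$ (your $L$ and $R$) cover the connected interval $[0,1]$ and so must intersect, which is your intermediate-value argument on $g$ in set form, and it gets uniqueness from the same two strict set-inclusion comparisons you use to show $g$ is strictly increasing. (Minor aside: at $t=1$ you do not need Assumption 1, since $[1,1]\subsetneq[0,1]$ gives $(0;1)\succ_i(1;0)$ directly from Assumption 2.)
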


\begin{proof} 
Define the correspondence
\[
A := \{ q \in [0,1] : (0;q) \succsim_i (q;1-q) \}, \quad
B := \{ q \in [0,1] : (q;1-q) \succsim_i (0;q) \}.
\]
By completeness, $A \cup B = [0,1]$.

At $q=0$, we have $(0;0) \prec_i (0;1)$ by monotonicity, hence $(0;0) \in B$.  
At $q=1$, both bundles have zero length: $(1;0) \sim_i (0;0)$. By monotonicity at location $0$, $(0;1) \succ_i (0;0) \sim_i (1;0)$, hence $(0;1) \in A$.

By continuity of preferences, both $A$ and $B$ are closed subsets of $[0,1]$. Since they are non-empty, closed, and cover a connected set, it follows that $A \cap B \neq \emptyset$. Thus, there exists $q_i^f \in [0,1]$ such that
\[
(0;q_i^f) \sim_i (q_i^f;1-q_i^f).
\]

To see that $q_i^f \in (0,1)$, note that monotonicity rules out indifference at the endpoints.

For uniqueness, suppose $q < q'$ both satisfy
\[
(0;q) \sim_i (q;1-q), \quad (0;q') \sim_i (q';1-q').
\]
By monotonicity, $(0;q') \succ_i (0;q)$ and $(q;1-q) \succ_i (q';1-q')$, which implies
\[
(0;q') \succ_i (q';1-q'),
\]
a contradiction. Hence the solution is unique.
\end{proof}


\begin{definition}[Balanced Allocation]
An allocation is said to be \emph{balanced} for agent $i \in N$ if there exists a cut point $a \in (0,1)$ such that
\[
(0;a) \sim_i (a;1-a).
\]
The corresponding cut point $a$ is referred to as the agent’s \emph{balanced cut}.
\end{definition}

Balanced allocations play a central role in characterizing fair outcomes. For identical agents, the balanced cut uniquely pins down the allocation that satisfies both Pareto efficiency and envy-freeness, as any deviation would induce strict preference for the other bundle. For heterogeneous agents, each agent’s balanced cut determines the bounds of the \emph{balanced region}, and an allocation is envy-free and Pareto efficient if and only if the chosen cut lies within this interval. Thus, balanced cuts provide a complete and tractable characterization of fair allocations in both symmetric and asymmetric environments.

\begin{definition}[Balanced Region]
Let $q_i^f$ and $q_j^f$ denote the balanced cuts of agents $i$ and $j$, respectively, with $q_i^f < q_j^f$. The \emph{balanced region} is the set of cut points
\[
[q_i^f,\, q_j^f] \subset (0,1)
\]
such that for any $\alpha \in [q_i^f, q_j^f]$, the allocation
\[
f_i(\succsim) = (0;\alpha), \quad f_j(\succsim) = (\alpha;1-\alpha)
\]
is feasible. 
\end{definition}

The notion of a balanced region anticipates the structure of fair allocations in this environment. Since each agent has a cut point at which she is indifferent between the left and right portions of the resource, it is natural to consider the interval spanned by these cut points when agents differ. This interval captures the range of divisions that respect both agents’ ordinal trade-offs between location and quantity. As we show below, this region will play a central role in identifying allocations that satisfy both Pareto efficiency and envy-freeness. We first provide the results when the two agents have identical preferences.

\subsection{Identical agents}

In this section, we provide results on fair allocation rules for two identical agents.

\begin{proposition}\label{prop_identical}
Let $f:\mathcal{D}^{2}\to \mathcal{A}$ be an allocation rule and consider any profile $\succsim \in \mathcal{D}^{2}$ such that $\succsim_1 = \succsim_2 = \succsim$. An allocation $f(\succsim)$ is Pareto efficient and envy-free if and only if there exists a cut point $a \in (0,1)$ such that
\[
f(\succsim) = \{(0;a),(a;1-a)\}
\quad \text{and} \quad
(0;a) \sim_i (a;1-a) \ \text{for both } i \in \{1,2\}.
\]
\end{proposition}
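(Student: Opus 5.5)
The plan is to prove the two directions separately, using only monotonicity (Assumption 2 and Observation \ref{ob_monotonic}), Assumption 1, and the fact that both agents share the same relation $\succsim$. Throughout, I read $f(\succsim)=\{(0;a),(a;1-a)\}$ as an unordered pair, so either agent may hold the left piece. By Proposition \ref{prop_balanced_noICF}, the cut $a$ in the statement is necessarily the unique balanced cut $q^f$ of $\succsim$.

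\emph{Necessity.} Suppose $f(\succsim)$ is PE and EF.

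First I show the allocation wastes nothing. If the union of the two intervals were not all of $[0,1]$, there would be an uncovered gap adjacent to at least one of the bundles. Extending that bundle into the gap, as in Observation \ref{ob_monotonic} (either $(x;q+\delta)$ or $(x-\delta;q+\delta)$), remains feasible. It makes that agent strictly better off and leaves the other agent unchanged, contradicting PE. Hence the bundles are $(0;a)$ and $(a;1-a)$ for some $a\in[0,1]$.

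Next, $a\in\{0,1\}$ is impossible. In that case one agent holds a zero-length bundle, which is indifferent to $(0;0)$ by Assumption 1. By monotonicity the other agent holds $(0;1)\succ(0;0)$, so the first agent envies, contradicting EF.

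Finally, let $i$ be the agent holding $(0;a)$ and $j$ the agent holding $(a;1-a)$. EF for $i$ gives $(0;a)\succsim(a;1-a)$, and EF for $j$, who has the same preference, gives $(a;1-a)\succsim(0;a)$. Together these yield indifference for both agents.

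\emph{Sufficiency.} Suppose $f(\succsim)=\{(0;a),(a;1-a)\}$ with $(0;a)\sim(a;1-a)$.

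EF is immediate, since each agent is indifferent between the two bundles. Let $L$ denote this common indifference level.

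For PE, take any feasible allocation $\{(x_1;q_1),(x_2;q_2)\}$. By the feasibility condition (ii), one interval lies weakly to the left of the other. So there is a point $c\in[0,1]$, namely the right endpoint of the left interval, such that the left bundle is contained in $[0,c]$ and the right bundle in $[c,1]$. By monotonicity (weak inclusion, with Assumption 1 covering degenerate bundles), the left bundle is $\precsim(0;c)$ and the right bundle is $\precsim(c;1-c)$. I then split into three cases.
\begin{itemize}[label={}]
\item If $c<a$, then $(0;c)\prec(0;a)\sim L$, so the agent holding the left bundle is strictly worse off than under $f$.
\item If $c>a$, then $(c;1-c)\prec(a;1-a)\sim L$, so the agent holding the right bundle is strictly worse off.
\item If $c=a$, both agents are weakly below $L$, so no one is strictly better off.
\end{itemize}
In every case the new allocation fails to Pareto dominate $f(\succsim)$. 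Because the two agents have the same preference, it does not matter which agent holds which piece.

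\emph{Main obstacle.} The delicate step is the PE argument. It requires a clean separating point $c$ and careful use of weak versus strict monotonicity when bundles are degenerate or share an endpoint. I would handle zero-length bundles via Assumption 1 and treat the case $c=a$ separately, as above.
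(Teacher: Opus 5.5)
Your proof is correct. Your necessity direction follows the paper's argument. Pareto efficiency forces an exhaustive partition, and you make the paper's one-line assertion precise by extending a bundle into an uncovered gap. Assumption 1 plus monotonicity then excludes $a\in\{0,1\}$, and the two envy-freeness inequalities under a common $\succsim$ give indifference.

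The genuine difference is in sufficiency. The paper's proof only notes, implicitly, that indifference yields envy-freeness. It never checks that the indifferent split is Pareto efficient; the only justification it offers elsewhere, in the proof of Theorem~\ref{thm1}, is that ``the entire interval is allocated.'' Your separating-point argument supplies this missing step: take any feasible allocation, let $c$ be the right endpoint of the left bundle, and compare $c$ with $a$. That step really does need the indifference, because exhaustiveness alone does not give Pareto efficiency even for identical agents. With the paper's own $u(x;q)=q(1-x)$, the split at $a=0.9$ gives utilities $0.9$ and $0.01$. The split at $c=0.03$ with the bundles swapped gives $0.97^2=0.9409$ and $0.03$, a strict Pareto improvement. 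So your route is more complete than the paper's on exactly the point where the paper is silent.

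One small thing to tidy concerns degenerate bundles. Condition (ii) allows a zero-length bundle strictly inside the other; for example, $\{(0;1),(1/2;0)\}$ is feasible. So neither ``one interval lies weakly to the left of the other'' nor ``union $=[0,1]$ forces $\{(0;a),(a;1-a)\}$'' is literally true. As you anticipate, Assumption 1 handles this: such a bundle is $\sim(0;0)$, strictly below $L$. For necessity, your $a\in\{0,1\}$ envy argument applies to it verbatim.
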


\begin{proof}
Pareto efficiency requires that the entire resource be allocated, so any feasible allocation must take the form $\{(0;a),(a;1-a)\}$ for some $a \in [0,1]$. 

Since preferences are identical, envy-freeness requires that neither agent strictly prefers the other’s bundle. This holds if and only if the two bundles are indifferent under $\succsim$, that is, $(0;a) \sim (a;1-a)$.

Finally, monotonicity rules out $a=0$ and $a=1$, so $a \in (0,1)$. At $a=0$ : the allocation is $(0 ; 0)$ and $(0 ; 1)$.
The agent receiving $(0 ; 0)$ compares it with $(0 ; 1)$, which is strictly larger at the same location. By monotonicity,

$$
(0 ; 1) \succ(0 ; 0),
$$

so this agent strictly prefers the other bundle contradicting envy-freeness. 

At $a=1$ : the allocation is $(0 ; 1)$ and $(1 ; 0)$.
The agent receiving $(1 ; 0)$ compares it to $(0 ; 1)$, which strictly dominates it (via the same monotonicity argument). This is a contradiction to envy-freeness. 
\end{proof}

Therefore, when agents are identical, envy-freeness requires that the two allocated bundles be indifferent under the common preference relation. If one bundle is strictly preferred to the other, the agent receiving the less preferred bundle will strictly prefer the other agent’s allocation and hence exhibit envy. In contrast, only allocations that place both bundles on the same indifference level can be envy-free. We now consider the case where agents may have distinct preferences.

\subsection{Non-identical agents}

\begin{theorem}\label{thm1}
Consider a preference profile $\succsim=(\succsim_1,\succsim_2)\in\mathcal{D}^2$ satisfying monotonicity. An allocation is Pareto efficient and envy-free if and only if it is a contiguous partition of the form $\{(0;\alpha),(\alpha;1-\alpha)\}$ for some $\alpha \in [0,1]$, with the following structure:

\begin{itemize}
\item[(i)] If $q_1^f = q_2^f$, then $\alpha = q_1^f$, and both agents are indifferent between the two bundles.

\item[(ii)] If $q_1^f < q_2^f$, then $\alpha \in [q_1^f, q_2^f]$, where $q_i^f$ denotes agent $i$’s balanced cut.
\end{itemize}
\end{theorem}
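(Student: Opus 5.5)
The plan is to split the argument into a Pareto step and an envy-freeness step, and then to check that the two together are also sufficient. Throughout, I use monotonicity (Assumption 2 and Observation \ref{ob_monotonic}) along the two boundaries of $T_0^1$. Along the $q$-axis, $(0;a)$ is strictly increasing in $a$ under set inclusion. Along the hypotenuse, $(a;1-a)$ is strictly decreasing in $a$.

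\textbf{Step 1 (PE forces a contiguous partition).} Take a feasible allocation $\{(x_1;q_1),(x_2;q_2)\}$ with $q_1+q_2<1$. By condition (ii) in the definition of $\mathcal{A}$, the interiors are disjoint, so after relabelling the bundles are ordered left to right and some gap in $[0,1]$ is uncovered. This gap is either to the left of the left bundle, between the two bundles, or to the right of the right bundle. Extending the adjacent bundle into the gap keeps feasibility and gives a strict superset, which is strictly preferred by Assumption 2 (in the form of Observation \ref{ob_monotonic}(i) or (ii)). The other agent is left untouched, so the original allocation is Pareto dominated. Hence a PE allocation has $q_1+q_2=1$ with disjoint interiors, which forces it to be $\{(0;\alpha),(\alpha;1-\alpha)\}$ for some $\alpha\in[0,1]$.

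\textbf{Step 2 (EF pins down $\alpha$ and the orientation).} Fix the partition at $\alpha$ and suppose agent $i$ receives the left piece. Agent $i$ is envy-free iff $(0;\alpha)\succsim_i(\alpha;1-\alpha)$. I claim this holds iff $\alpha\ge q_i^f$.
\begin{itemize}
\item If $\alpha\ge q_i^f$, the chain $(0;\alpha)\succsim_i(0;q_i^f)\sim_i(q_i^f;1-q_i^f)\succsim_i(\alpha;1-\alpha)$ gives the inequality.
\item If $\alpha<q_i^f$, both outer relations in the chain become strict and reversed, so $i$ envies.
\end{itemize}
Symmetrically, the agent holding the right piece is envy-free iff $\alpha\le$ her own balanced cut. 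Existence and uniqueness of the cuts come from Proposition \ref{prop_balanced_noICF}.
\begin{itemize}
\item If $q_1^f<q_2^f$, giving the left piece to agent 2 would need $q_2^f\le\alpha\le q_1^f$, which is impossible. So agent 1 takes $(0;\alpha)$ and EF holds iff $\alpha\in[q_1^f,q_2^f]$, which is (ii).
\item If $q_1^f=q_2^f$, either orientation forces $\alpha=q_1^f$. Both agents are then indifferent by definition of the balanced cut, which is (i) and is consistent with Proposition \ref{prop_identical}.
\end{itemize}

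\textbf{Step 3 (sufficiency: such allocations are PE).} Let $\alpha$ be as in (i) or (ii), and suppose some feasible allocation weakly improves both agents and strictly improves one. By Step 1 applied to that allocation, I may assume it is itself a partition at some $b$; otherwise it is dominated by a partition, and transitivity carries the improvement over.
\begin{itemize}
\item \emph{Same orientation.} Agent 1 weakly gains only if $b\ge\alpha$, and agent 2 weakly gains only if $b\le\alpha$, so $b=\alpha$ and no one strictly gains.
\item \emph{Swapped orientation.} Agent 1 now holds $(b;1-b)$ and needs $(b;1-b)\succsim_1(0;\alpha)\succsim_1(\alpha;1-\alpha)$, using EF at $\alpha$; monotonicity along the hypotenuse then gives $b\le\alpha$. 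Agent 2 needs $(0;b)\succsim_2(\alpha;1-\alpha)\succsim_2(0;\alpha)$, which gives $b\ge\alpha$.
\end{itemize}
In the swapped case, therefore, $b=\alpha$ and every relation in both chains is an indifference, so again there is no strict gain.

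The step I expect to need the most care is Step 3, specifically the swapped-orientation case. There EF at $\alpha$ has to be fed back in to rule out domination, and the strictness in monotonicity must be tracked so that the equality $b=\alpha$ is forced. A secondary point is writing the gap-filling argument in Step 1 cleanly for all three gap positions.
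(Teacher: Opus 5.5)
Your proof is correct and follows the paper's route: Pareto efficiency forces a full partition $\{(0;\alpha),(\alpha;1-\alpha)\}$, and comparing $\alpha$ with each agent's balanced cut determines both the interval $[q_1^f,q_2^f]$ and which agent takes the left piece. Your monotonicity chains through $(0;q_i^f)\sim_i(q_i^f;1-q_i^f)$ justify the strict single-crossing more directly than the paper's appeal to continuity. Your Step 3 is a real improvement: the paper only asserts that allocating the whole interval yields Pareto efficiency, which does not exclude domination by the swapped partition, and your use of envy-freeness at $\alpha$ to force $b=\alpha$ with all relations indifferent is exactly what closes that gap.
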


The proof is provided in Appendix \ref{proof_thm1}. 

\medskip

\noindent \textbf{Discussion} 

Theorem \ref{thm1} provides a complete characterization of all fair allocations in our framework. The key objects driving the result are the agents’ cutoff points $q_1^f$ and $q_2^f$, at which each agent is indifferent between receiving the left and the right portion of the resource.

When the agents’ cutoff points coincide (Case (i)), the structure is knife-edge: there is a unique cut that simultaneously satisfies both Pareto efficiency and envy-freeness. Any deviation from this cut makes one agent strictly prefer the other’s bundle, thereby violating envy-freeness. Thus, in this case, fairness pins down the allocation.

In contrast, when the cutoff points differ (Case (ii)), the set of fair allocations expands to a non-degenerate interval of cut points, which we term the \emph{balanced region}. This region captures precisely the set of divisions that respect both agents’ ordinal trade-offs between location and quantity. It is the interval over which a feasible partition can be made such that neither agent prefers the other’s bundle.

The economic intuition is as follows. The agent with the smaller cutoff point $q_i^f$ has a stronger relative preference for the left portion of the resource, while the agent with the larger cutoff point $q_j^f$ places relatively greater value on the right portion. Any cut $\alpha \in [q_i^f, q_j^f]$ assigns to each agent a bundle that lies on the preferred side of their respective cutoff. As a result, each agent weakly prefers her own allocation to that of the other, ensuring envy-freeness. At the same time, the allocation exhausts the resource, guaranteeing Pareto efficiency.

A central implication of this characterization is that fairness is not tied to equal division. An equal split corresponds to a single point $\alpha = \frac{1}{2}$, which is fair if and only if it lies within the balanced region. Thus, whether equal division is fair is entirely determined endogenously by the agents’ preferences, rather than imposed ex ante.


\noindent \textbf{Remark.} A direct consequence of Theorem \ref{thm1} is that the equal division $((0;\frac{1}{2}),(\frac{1}{2};\frac{1}{2}))$ or $((\frac{1}{2};\frac{1}{2}),(0;\frac{1}{2}))$ is envy-free and Pareto efficient if and only if $\frac{1}{2}$ lies in the balanced region. In the diagram below, the equal division lies in the balanced region in the first case, but not in the other two cases.

\begin{figure}[htbp]
\centering
\includegraphics[scale=0.33]{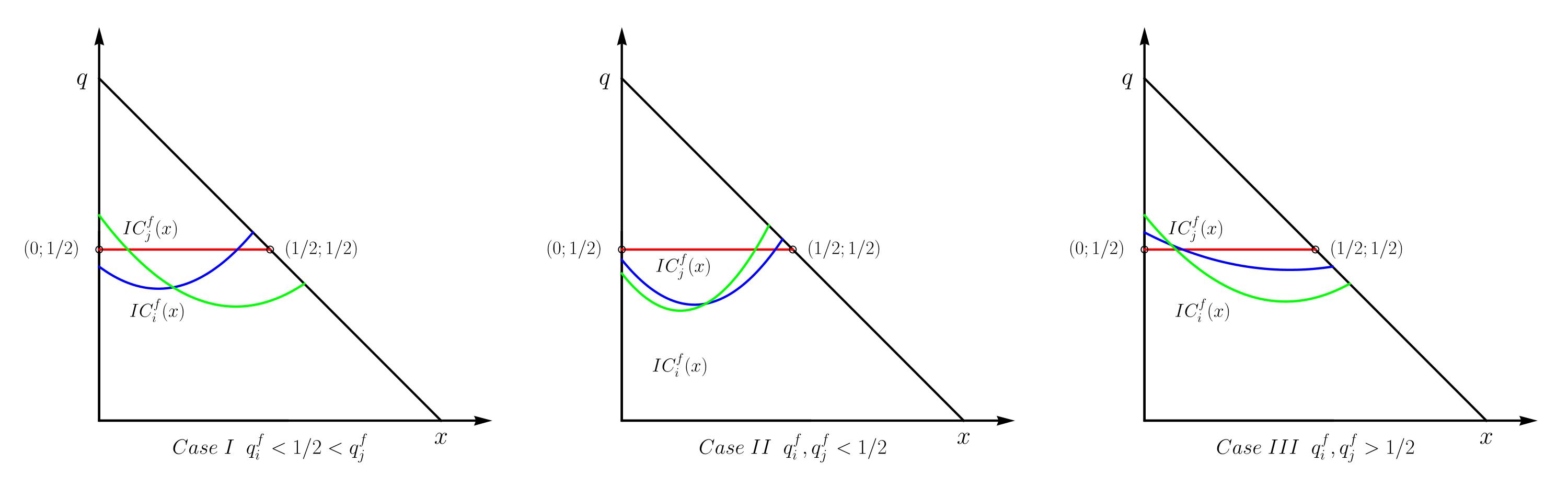}
\caption{Equal split is `fair' only when $(0;\frac{1}{2})$ and $(\frac{1}{2},\frac{1}{2})$ are in the balanced region}
\label{fig6_n}
\end{figure}

\section{Strategy-proofness and fairness}\label{sec5}

Theorem \ref{thm2} establishes our main impossibility result: no fair (PE and EF) allocation rule can be completely immune to manipulation. The formal proof is provided in Appendix \ref{proof_thm2}. 

\begin{theorem}\label{thm2}
There exists no allocation rule $f: \mathcal{D}^{2} \to \mathcal{A}$ that is strategy-proof (SP), envy-free (EF), and Pareto efficient (PE).   
\end{theorem}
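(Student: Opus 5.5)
The plan is to combine the characterization in Theorem \ref{thm1} with a simple manipulation argument. Suppose, for contradiction, that $f$ is SP, EF and PE. Fix a profile $\succsim=(\succsim_1,\succsim_2)$ with $q_1^f<q_2^f$. By Theorem \ref{thm1}, $f(\succsim)$ is a partition $\{(0;\alpha),(\alpha;1-\alpha)\}$ with $\alpha\in[q_1^f,q_2^f]$.

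The first step is to pin down who gets which piece. For any cut $\alpha>q_i^f$, monotonicity (Observation \ref{ob_monotonic}) together with Proposition \ref{prop_balanced_noICF} gives
\[
(0;\alpha)\succ_i(0;q_i^f)\sim_i(q_i^f;1-q_i^f)\succ_i(\alpha;1-\alpha).
\]
So at a cut strictly above her balanced cut, agent $i$ strictly prefers the left piece. Symmetrically, at a cut strictly below $q_i^f$ she strictly prefers the right piece. Whenever $q_1^f<\alpha$, EF therefore forces agent $1$ to hold the left piece, and whenever $\alpha<q_2^f$, agent $2$ must hold the right piece.

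I will use this repeatedly, including at deviation profiles. It also handles the endpoint cases $\alpha=q_1^f$ or $\alpha=q_2^f$, where one agent is indifferent, so that I know which bundle each agent actually receives. The standing convention is that agent $1$ holds $(0;\alpha)$ and agent $2$ holds $(\alpha;1-\alpha)$. Agent $1$ strictly gains from a larger cut and agent $2$ from a smaller one, both by Observation \ref{ob_monotonic}.

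The second step is to show that SP for agent $1$ forces $\alpha=q_2^f$. Suppose $\alpha<q_2^f$. Let agent $1$ report some $\succsim_1'\in\mathcal{D}$ whose balanced cut $q'$ lies in $(\alpha,q_2^f)$. By Theorem \ref{thm1}, the new cut $\alpha'$ lies in $[q',q_2^f]$, so $\alpha'>\alpha$. By the first step, agent $1$ (whose reported cut $q'$ is below $q_2^f$) again receives the left piece $(0;\alpha')$, which is strictly preferred under her true $\succsim_1$. This contradicts SP.

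Symmetrically, if $\alpha>q_1^f$, agent $2$ reports a preference with balanced cut $q''\in(q_1^f,\alpha)$. The new cut then lies in $[q_1^f,q'']$, strictly below $\alpha$, and agent $2$ receives a strictly larger right piece. Hence $\alpha=q_1^f$. Together these give $q_1^f=\alpha=q_2^f$, contradicting $q_1^f<q_2^f$.

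The main obstacle, and the step I would write out carefully, is richness of the domain. I need preferences in $\mathcal{D}$ with an arbitrary prescribed balanced cut $t\in(0,1)$, both to build the initial profile with distinct cuts and to build the deviations. I would use the integral representation from the Observation on $\int_x^{x+q}v$, taking a positive step density $v=c$ on $[0,t]$ and $v=1$ on $(t,1]$. Then $(0;t)\sim(t;1-t)$ reduces to $ct=1-t$, i.e.\ $c=(1-t)/t>0$. This density is not continuous, but the resulting $u$ is still continuous, strictly monotone under set inclusion and constant at zero quantity, so it satisfies Assumptions 0--2. Alternatively one can smooth $v$ and adjust $c$ by the intermediate value theorem. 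Uniqueness of the balanced cut follows from Proposition \ref{prop_balanced_noICF}, so the deviation's cut is exactly the targeted $q'$.
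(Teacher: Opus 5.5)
Your proof is correct, and it takes a more direct route than the paper's. The paper works with three fixed preferences (balanced cuts $\frac{1}{2}$, $q'^{f}<\frac{1}{2}$, $q''^{f}>\frac{1}{2}$). It fixes the assignment at the symmetric profile by a WLOG. It then propagates SP constraints along a chain of profiles until, at $\succsim^5=(\succsim_1'',\succsim_2')$, the cut is forced to the lower endpoint $q'^{f}$ with agent $2$ holding $(0;q'^{f})$. The contradiction comes from agent $2$ reporting the cut-$\frac{1}{2}$ type, which is exactly your second step with one particular deviating preference.

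You instead choose the deviating preference adaptively, with balanced cut strictly between $\alpha$ and the opponent's cut. EF then pins down who holds which piece at the deviation profile, so you never need to know where $\alpha$ lies in advance. This removes both the chain and the WLOG. It also proves more: every PE and EF rule is manipulable at every profile with $q_1^f\neq q_2^f$. Your step-density construction makes explicit the richness of $\mathcal{D}$ that the paper relies on but never verifies, even for its three types.

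The cost is that your deviations need balanced cuts in arbitrarily small intervals, i.e.\ a dense set of types. The paper's chain needs only its three preferences. Its argument therefore gives the impossibility even for rules defined on that three-type subdomain, where SP is a weaker requirement.
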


The logic of the result is driven by the structure of efficient and envy-free allocations. As shown in Theorem \ref{thm1}, any PE and EF allocation must lie within a well-defined interval of feasible division points determined by the agents’ respective balanced cuts. Thus, the mechanism’s choice is effectively pinned down to a restricted one-dimensional region of admissible allocations.

The proof then constructs a sequence of preference profiles in which the agents’ cutoffs move gradually from symmetric to asymmetric configurations. At each step, SP forces the mechanism to keep adjusting the cut in a very specific direction to prevent profitable misreports. Intuitively, whenever an agent’s preference becomes slightly more “left-biased” or “right-biased”, the mechanism must respond by shifting the allocation boundary so that this agent cannot benefit from pretending to be of another type.

As the construction progresses, these local incentive constraints accumulate and force the allocation rule to “track” different agents’ cutoffs in incompatible ways across profiles. In some profiles, SP requires the cut to remain fixed at a symmetric point; in others, it forces the cut to move toward one agent’s cutoff; in others still, it forces it to move in the opposite direction. The mechanism is therefore pulled in conflicting directions depending on the surrounding preference environment.

The contradiction arises in the final step: there exists a profile change in which an agent can gain by misreporting in one direction, but SP forces the mechanism (because of earlier constraints) to allocate in a way that makes this deviation strictly profitable. This violates strategy-proofness. Hence, no rule can simultaneously satisfy SP, EF, and PE.

One may conjecture that SP is the main axiom driving the result of Theorem \ref{thm2}. This could be due to the richness of the domain. However, the standard dictatorship rule is a counterexample of this since it satisfies SP and PE but not EF. 

\begin{definition}[Dictatorship Rule]
\label{def_standard_dictatorship}
Let $i \in \{1, 2\}$ be the dictator, and let $j \neq i$ be the other agent. An allocation rule $f^{i}$ is a \text{Dictatorship rule} with respect to agent $i$ if, for all preference profiles $\succsim \in \mathcal{D}^2$, it assigns the complete interval $(0;1)$ to agent $i$, leaving the empty set $\emptyset$ for agent $j$:
$$f(\succsim) = (f_i(\succsim), f_j(\succsim)) = ((0; 1), \emptyset)$$
for all $\succsim \in \mathcal{D}^2$.
\end{definition}

The dictatorship rule highlights that strategy-proofness and Pareto efficiency alone are not restrictive in this environment. In particular, the rule trivially satisfies SP, as the allocation is independent of the reported preferences of the non-dictator, and it satisfies PE since the entire resource is allocated to one agent. However, it violates envy-freeness whenever the non-dictator strictly prefers a non-empty bundle to the empty set, which holds under monotonicity. This observation clarifies that the impossibility in Theorem \ref{thm2} is not driven by the richness of the domain per se, but by the tension between envy-freeness and strategy-proofness. In particular, envy-freeness requires the allocation to respond to both agents’ preferences in a balanced way, whereas strategy-proofness restricts such responsiveness, ultimately leading to the impossibility.

\section{Discussion: The Multi-Agent Case}\label{sec6}

\noindent \textbf{Exploring fair allocation rules for multiple agents}

Throughout this paper, we have restricted attention to the two-agent case. A natural question is whether the existence of Pareto efficient and envy-free allocations established in Theorem \ref{thm1} extends to environments with $n \geq 3$ agents. We conjecture that such an extension fails in our domain.

In the standard cake-cutting literature with additive utilities over a one-dimensional interval, the existence of contiguous envy-free allocations for $n$ agents is guaranteed by fixed-point arguments (e.g., \cite{stromquist1980cut, su1999rental}). By contrast, our ordinal framework over the two-dimensional $(x;q)$ space admits substantially richer, non-separable preferences over location and quantity. This added flexibility creates a structural obstruction to existence.

The key difficulty arises from what we term the \emph{interior allocation problem}. Any contiguous allocation among $n \geq 3$ agents necessarily assigns at least $n-2$ agents an interior interval, i.e., an interval strictly contained in $(0,1)$. Consider a profile in which all agents strictly prefer boundary allocations. In such a profile, an agent assigned an interior interval can avoid envying boundary agents only if compensated with a sufficiently large quantity. However, because total quantity is bounded by $1$, such compensation must come at the expense of boundary agents, who will then, by monotonicity, strictly prefer the interior bundle. This creates a tension that cannot, in general, be resolved.

Because preferences in our model are not restricted to admit an additive representation, one can construct continuous and monotonic preferences for which this tension generates unavoidable envy for every feasible partition. This suggests that, for $n \geq 3$, the set of Pareto efficient and envy-free allocations may be empty. Formalizing this non-existence result remains an open problem.

This observation stands in contrast to the classical result of \cite{sprumont1991division} for the division of a single homogeneous good. In that setting, monotonicity alone suffices to guarantee the existence of allocations satisfying both Pareto efficiency and envy-freeness for any number of agents. Our analysis highlights that introducing spatial heterogeneity—thereby expanding the allocation space from one dimension to two—fundamentally changes the geometry of the problem and can destroy existence of fair allocations altogether.

\medskip

\noindent \textbf{Extending the Impossibility to Multiple Agents}

A related question is whether the impossibility result of Theorem \ref{thm2}—the incompatibility of Pareto efficiency, envy-freeness, and strategy-proofness—extends beyond the two-agent case. We conjecture that it does.

The intuition is that the strategic tension identified in the two-agent environment persists within suitably defined subproblems of any larger economy. Fix the allocations of $n-2$ agents and consider the residual division problem between the remaining two agents. By Theorem \ref{thm2}, no rule can simultaneously satisfy Pareto efficiency, envy-freeness, and strategy-proofness in this induced two-agent problem. As a result, any mechanism defined on the full $n$-agent domain must inherit manipulability along some such subprofile.

While this argument is heuristic and a full formal reduction requires additional structure, it suggests that the incompatibility between Pareto efficiency, envy-freeness, and strategy-proofness is not an artifact of the two-agent case but a fundamental feature of the model.

\section{Conclusion}

We study the fair allocation of a heterogeneous, perfectly divisible resource between two agents in an ordinal framework where preferences depend jointly on location and quantity. Under monotonicity, we characterize the full set of allocations that satisfy Pareto efficiency and envy-freeness via the notion of a balanced region. Building on this characterization, we establish an impossibility result: no allocation rule can simultaneously satisfy Pareto efficiency, envy-freeness, and strategy-proofness.

An important direction for future research is to identify minimal restrictions on preferences or allocations—such as weakening contiguity or limiting the richness of the domain—under which these desiderata can be jointly restored.

\bibliography{biblio.bib}

\section*{Appendix}

\subsection*{Proof of Theorem \ref{thm1}}
\label{proof_thm1}

By monotonicity, any allocation that leaves a subset of the unit interval unassigned is Pareto dominated, since assigning that subset to any agent strictly improves their utility. Hence, Pareto efficiency requires that the entire interval be allocated. Therefore, any Pareto efficient allocation must be of the form $\{(0;\alpha),(\alpha;1-\alpha)\}$ for some $\alpha \in [0,1]$.

Fix an agent $i$. Consider the comparison between the bundles $(0;\alpha)$ and $(\alpha;1-\alpha)$ as $\alpha$ varies over $[0,1]$. By monotonicity, at $\alpha=0$ the agent strictly prefers $(0;1)$ to $(0;0)$, and at $\alpha=1$ the agent strictly prefers $(0;1)$ to $(1;0)$. Under continuity of preferences, there exists a cutoff $q_i^f \in [0,1]$ such that for all $\alpha<q_i^f$ the agent strictly prefers $(\alpha;1-\alpha)$, for all $\alpha>q_i^f$ the agent strictly prefers $(0;\alpha)$, and at $\alpha=q_i^f$ the agent is indifferent between the two bundles.

Case (i): $q_1^f = q_2^f = q^f$. At $\alpha=q^f$, both agents are indifferent between $(0;q^f)$ and $(q^f;1-q^f)$. Assigning one bundle to each agent yields that each agent weakly prefers their own bundle to the other's. Hence, the allocation is envy-free. Since the entire interval is allocated, it is also Pareto efficient.

Case (ii): $q_1^f < q_2^f$. We first show that no allocation with $\alpha \notin [q_1^f,q_2^f]$ can be envy-free. If $\alpha<q_1^f$, then both agents strictly prefer $(\alpha;1-\alpha)$ to $(0;\alpha)$, so the agent receiving $(0;\alpha)$ strictly envies the other. If $\alpha>q_2^f$, then both agents strictly prefer $(0;\alpha)$ to $(\alpha;1-\alpha)$, so the agent receiving $(\alpha;1-\alpha)$ strictly envies the other. Thus, envy-freeness requires $\alpha \in [q_1^f,q_2^f]$.

Further, allocating $(0;\alpha)$ to agent 2 where $\alpha \in [q_1^f,q_2^f]$ results in both agents envying each other. This is because the allocation of each agent is below their balanced curve and that of the other agent, above. 

Finally, take any $\alpha \in [q_1^f,q_2^f]$ and assign $(0;\alpha)$ to agent $1$ and $(\alpha;1-\alpha)$ to agent $2$. Since $\alpha \ge q_1^f$, agent $1$ weakly prefers $(0;\alpha)$ to $(\alpha;1-\alpha)$. Since $\alpha \le q_2^f$, agent $2$ weakly prefers $(\alpha;1-\alpha)$ to $(0;\alpha)$. Hence, neither agent envies the other, and the allocation is envy-free. As the entire interval is allocated, it is also Pareto efficient. 

\begin{flushright}
$\blacksquare$
\end{flushright}

\begin{figure}[htbp]
\centering
\includegraphics[scale=0.45]{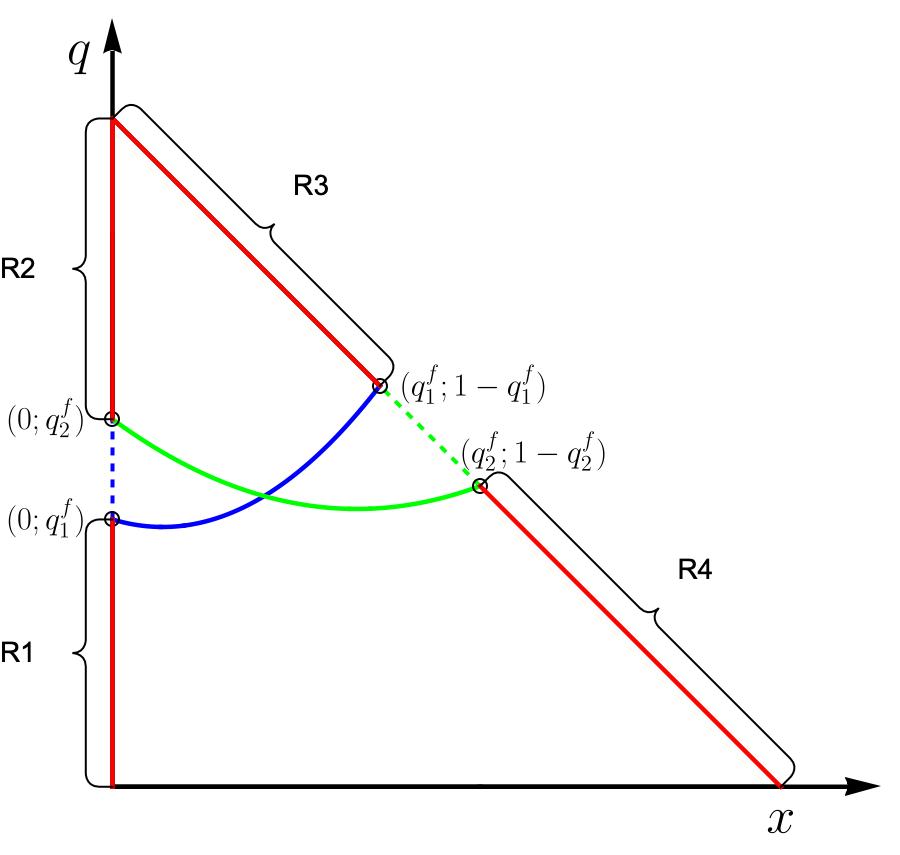}
\caption{Illustrative balanced IC region (dotted)}
\label{fig6}
\end{figure}

\subsection*{Proof of Theorem \ref{thm2}}
\label{proof_thm2}

Suppose, for contradiction, that there exists an allocation rule $f:\mathcal{D}^{2}\to \mathcal{A}$ that is strategy-proof (SP), envy-free (EF), and Pareto efficient (PE).

Consider three classes of preferences. All preferences are monotonic and continuous, and each admits a unique cutoff $q_i^f \in (0,1)$ as in Theorem \ref{thm1}, such that agent $i$ prefers $(\alpha;1-\alpha)$ to $(0;\alpha)$ for $\alpha<q_i^f$, prefers $(0;\alpha)$ for $\alpha>q_i^f$, and is indifferent at $\alpha=q_i^f$.

We consider three types:\\
Type 1: $q_i^f = \frac{1}{2}$.
Type 2: $q'^{f} < \frac{1}{2}$.
Type 3: $q''^{f} > \frac{1}{2}$.

Let $a = (0;\frac{1}{2})$, $b = (\frac{1}{2};\frac{1}{2})$, $a' = (0;q'^{f})$, and $b' = (q'^{f};1-q'^{f})$.

By Theorem \ref{thm1}, any EF and PE allocation must take the form $\{(0;\alpha),(\alpha;1-\alpha)\}$ with $\alpha$ lying in the interval between the agents’ cutoffs.

\begin{figure}[ht]
     \centering
     \includegraphics[scale=0.40]{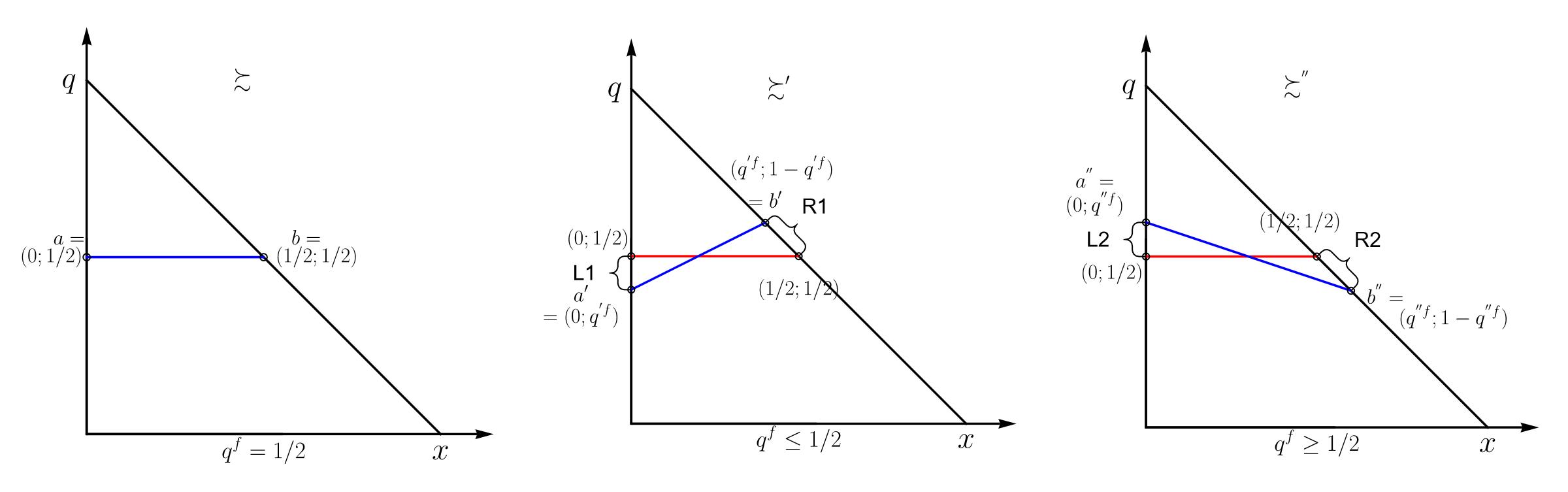}
     \caption{Construction of preferences for Theorem \ref{thm2}}
     \label{fig10}
\end{figure}
We now construct a sequence of profiles.

Consider first $\succsim^1 = (\succsim_1,\succsim_2)$ where both agents are Type 1. Then $q_1^f = q_2^f = \frac{1}{2}$, so $\alpha = \frac{1}{2}$. Without loss of generality, $f_1(\succsim^1)=a$ and $f_2(\succsim^1)=b$.

Next consider $\succsim^2 = (\succsim_1',\succsim_2)$, where agent 1 is Type 2 and agent 2 is Type 1. The feasible set of EF allocations requires $\alpha \in [q'^{f},\frac{1}{2}]$. Suppose $\alpha < \frac{1}{2}$. Then agent 1, whose cutoff is $q'^{f} < \frac{1}{2}$, strictly prefers $a$ to $(0;\alpha)$. By misreporting as Type 1, the profile becomes $\succsim^1$, where she receives $a$. This is a profitable deviation. Hence SP requires $\alpha = \frac{1}{2}$, so $f(\succsim^2)=(a,b)$.

Now consider $\succsim^3 = (\succsim_1',\succsim_2')$, where both agents are Type 2. Then $q_1^f=q_2^f=q'^{f}$, so $\alpha = q'^{f}$. The only EF allocations are $a'$ and $b'$. Consider agent 2 at profile $\succsim^2$, where she receives $b$. Since $q'^{f}<\frac{1}{2}$, the bundle $b'$ strictly increases the length of the interval relative to $b$. By monotonicity, agent 2 strictly prefers $b'$ to $b$. Hence, if $f_2(\succsim^3)=b'$, she can profitably deviate from $\succsim^2$ to $\succsim^3$. Therefore strategy-proofness requires $f(\succsim^3)=(b',a')$.

Next consider $\succsim^4 = (\succsim_1,\succsim_2')$, where agent 1 is Type 1 and agent 2 is Type 2. The feasible region is $\alpha \in [q'^{f},\frac{1}{2}]$. At $\succsim^3$, agent 1 receives $b'$, which has length $1-q'^{f}$. If he misreports at $\succsim^4$ as Type 2, the outcome becomes $\succsim^3$, giving him $b'$. To prevent this deviation, at $\succsim^4$ he must receive a bundle of length at least $1-q'^{f}$. This requires $\alpha = q'^{f}$ and assignment of the right interval to agent 1. Hence $f(\succsim^4)=(b',a')$.

Now consider $\succsim^5 = (\succsim_1'',\succsim_2')$, where agent 1 is Type 3 and agent 2 is Type 2. The feasible region is $\alpha \in [q'^{f},q''^{f}]$. At $\succsim^4$, if agent 1 misreports as Type 2, he obtains $b'$. Under Type 3 preferences, since $q''^{f}>q'^{f}$, the bundle $b'$ lies weakly on the preferred side of the cutoff relative to any bundle generated by $\alpha>q'^{f}$. Hence $b'$ is weakly optimal for agent 1 in this feasible set. To prevent deviation, strategy-proofness requires $f(\succsim^5)=(b',a')$.

Finally consider $\succsim^6 = (\succsim_1'',\succsim_2)$, where agent 1 is Type 3 and agent 2 is Type 1. The feasible region is $\alpha \in [\frac{1}{2},q''^{f}]$. By Theorem \ref{thm1}, agent 2 must receive $(0;\alpha)$ for some $\alpha \ge \frac{1}{2}$. At $\succsim^5$, agent 2 receives $a'=(0;q'^{f})$. If she misreports as Type 1, the profile becomes $\succsim^6$, where she receives $(0;\alpha)$ with $\alpha \ge \frac{1}{2} > q'^{f}$. By monotonicity, she strictly prefers $(0;\alpha)$ to $a'$, yielding a profitable deviation. Any allocation will result in violation of strategy-proofness.

Therefore, no allocation rule can satisfy SP, EF, and PE simultaneously.

\begin{flushright}
$\blacksquare$
\end{flushright}

\end{document}